\NewDocumentCommand{\tra}{om}{%
	\IfNoValueTF{#1}
	{#2}
	{#2_{[#1]}}%
}
\NewDocumentCommand{\pce}{om}{%
	\IfNoValueTF{#1}
	{\mathsf{#2}}
	{\mathsf{#2}^{#1}}%
}
\NewDocumentCommand{\basis}{om}{%
	\IfNoValueTF{#1}
	{\| \phi^{#2}\|^2}
	{\langle \phi^{#1},\phi^{#2} \rangle}%
}
\definecolor{gray1}{rgb}{0.9,	0.91,	0.93}
\definecolor{lightblue}{rgb}{0.6, 0.76, 0.92}
\definecolor{gray2}{rgb}{0.95, 0.95,0.95	}
\definecolor{gray3}{rgb}{0.85, 0.85, 0.85	}
\definecolor{lightblue1}{rgb}{0.87, 0.92, 0.97	}
\definecolor{lightblue2}{rgb}{0.62, 0.76, 0.90	}
\definecolor{lightblue3}{rgb}{0.36	0.61	0.84}
\newcommand{\End}{\hfill $\square$}
\DeclareMathOperator*{\trace}{tr}
\newcommand{\mbb}[1]{\mathbb{#1 }}
\newcommand{\mbf}[1]{\mathbf{#1}} 
\newcommand{\mcl}[1]{\mathcal{#1}}
\newcommand{\pcecoe}[2]{\mathsf{#1}^{#2}}
\newcommand{\relx}{(\omega)}
\newcommand{\trar}[2]{#1_{[0,{#2}]}}
\newcommand{\set}[1]{\mathbb{#1}}    
\newcommand{\inst}[1]{_{#1}}
\newcommand{\splx}[1]{\mcl{L}^2(\Omega, \mathcal{F}, \mu; \mathbb{R}^{#1})}
\newcommand{\spl}{\mcl{L}^2(\Omega, \mathcal{F}, \mu; \mathbb{R})} 
\newcommand{\diff}{\mathop{}\!\mathrm{d}}
\newcommand{\mean}{\mbb{E}}
\newcommand{\prob}{\mbb{P}}
\newcommand{\hankel}{\mcl{H}}
\newcommand{\Tini}{{T_{\text{ini}}}}
\newcommand{\ini}{{\text{ini}}} % initial condition
\newcommand{\dimx}{{n_x}}
\newcommand{\dimy}{{n_y}}
\newcommand{\dimu}{{n_u}}
\newcommand{\dimw}{{n_w}}
\newcommand{\dimv}{{n_v}}
\newcommand{\I}{\mathbb{I}}
\newcommand{\N}{\mathbb{N}}
\newcommand{\R}{\mathbb{R}}
\newtheorem{theorem}{Theorem}
\newtheorem{lemma}{Lemma}
\newtheorem{proposition}{Proposition}
\newtheorem{definition}{Definition}
\newtheorem{assumption}{Assumption}
\newcommand{\wmean}{ m}
\newcommand{\wcov}{ \Gamma}
\title{Distributionally robust uncertainty quantification \\via data-driven stochastic optimal control}
\author{Guanru Pan and Timm Faulwasser$^{\star}$
\thanks{
	$^{\star}$: Corresponding author. }%
\thanks{The authors gratefully acknowledge funding by the German Research Foundation (Deutsche Forschungs-
gemeinschaft DFG) under project number 499435839. GP and TF are with the Institute for Energy Systems, Energy Efficiency and Energy Economics, TU Dortmund University, Dortmund,  Germany
       {\tt\small guanru.pan@tu-dortmund.de and timm.faulwasser@ieee.org} 
}
}
\begin{document}

\maketitle
\thispagestyle{empty}
\pagestyle{empty}

\begin{abstract}
This paper studies  optimal control problems of unknown linear systems 
subject to  stochastic disturbances of uncertain distribution.  Uncertainty about the stochastic disturbances is usually described via ambiguity sets of probability measures or distributions. 
Typically, stochastic optimal control requires knowledge of underlying dynamics and is as such challenging.
 Relying on a stochastic fundamental lemma from data-driven control and on the framework of polynomial chaos expansions, we propose an approach to reformulate distributionally robust optimal 
control problems with  ambiguity sets as uncertain conic programs in a finite-dimensional vector space. We show how to construct these programs from previously recorded data and how to relax the uncertain 
conic program to numerically tractable convex programs via appropriate 
sampling of the underlying distributions. The efficacy of our method is 
illustrated via a numerical example.\vspace*{1mm}
\end{abstract}

\textbf{Keywords}:
 Ambiguity set, optimal control, data-driven control, Willems' fundamental lemma,   uncertainty propagation, polynomial chaos expansion

\section{INTRODUCTION}
In many real-world applications, stochastic disturbances pose significant challenges, such as distributed energy systems facing uncertain wind speed and renewable energy generation, or building control systems dealing with uncertain weather conditions and occupancy.
To hedge against the uncertainty surrounding the disturbance statistics, distributionally robust formulations optimize over an \textit{ambiguity set} of possible disturbance distributions ensuring robust satisfaction of equality and inequality constraints~\cite{Wiesemann2014}.
Additionally, the complexity and time-consuming nature of first principles modeling and system identification further motivate the need for data-driven approaches.

	There are two prominent data-driven avenues to  distributionally robust optimal control:  data-based synthesis of ambiguity sets to capture the uncertainty surrounding the distribution of disturbances while requiring explicit knowledge of a system model \cite{Fochesato2022,Lu2020,Coppens2021} and  robustness analysis of data-driven system descriptions with respect to uncertainty surrounding the distribution of the measurement noise \cite{Coulson21a}. However,  uncertainty propagation through dynamics without explicit knowledge of the system model and considering distributional uncertainty of the disturbance is still an open problem. In this work, we address this gap by generalizing the data-driven description of stochastic linear systems  based on Polynomial Chaos Expansion (PCE) from~\cite{Pan2022a,Faulwasser2022} towards  uncertainty surrounding the disturbance distribution. 
	
	Specifically, the present paper appears to be the first to combine data-driven descriptions of stochastic systems via PCE and Hankel matrices, exact convex reformulation of Gelbrich ambiguity sets, and exact reformulation of chance constraints towards  distributionally robust stochastic optimal control without explicit model knowledge. The main contributions are threefold:
	(i) we present a novel formulation of ambiguity sets for distributionally robust optimization using PCE including  an exact convex reformulation for Gelbrich ambiguity sets.
	Moreover, while \cite{VanParys2015,Coppens2021,Coulson21a} use the conditional value at risk to reformulate chance constraints, we consider an exact reformulation applicable under distributional uncertainty. (ii) in contrast to \cite{Li2021}, which considers ambiguity sets specified by fixed values of the first two moments, we  allow for ranges of the first two moments via Gelbrich sets.
	(iii) we present mild conditions under which a distributionally robust Optimal Control Problem (OCP) with Gelbrich ambiguity and stated in random variables can be equivalently reformulated as an uncertain conic program without explicit knowledge of the system matrices. We also propose an approach to approximate this uncertain conic program with sampled uncertainty distributions. Finally, we draw upon a simulation example to demonstrate the efficacy of the proposed scheme.

\subsubsection*{Notation}
Given a vector $x \in \R^n$ and a matrix $M \in \R^{n\times m}$, we specify $\|x\|$ as the 2-norm and $\|M\| = \sqrt{\trace(MM^\top)}$ as the Frobenius norm.
We denote the set of all positive semi-definite (positive definite) matrices in $\R^{n \times n}$ as $\mbb{S}_+^{n}$ ($\mbb{S}_{++}^{n}$). The principal square root of  $Q \in \mbb{S}_+^{n} $ is  written as $Q^{\frac{1}{2}}$.
The vectorization of  $\{x_k\}_{k=0}^{N-1}$ is denoted as $x_{[0,N-1]}$.

\section{Problem statement and preliminaries} \label{sec:problem_statement}
We first revisit the essential notions of probability theory. For rigorous definitions, we refer to the textbook~\cite{sullivan15introduction}. A measurable space is a pair $(\Omega,\mcl F)$ where $\Omega$ is the sample space and $\mcl F$ is a $\sigma$-algebra on $\Omega$. A \textit{probability measure} on the measurable space $(\Omega,\mcl F)$ is a function $\mu : \mcl F \to [0,1]$ with $\mu(\Omega)=1$. The triple $(\Omega,\mcl F,\mu)$ is a probability space. A \textit{random variable} $V$ is a  measurable function $V : \Omega \to \R^{\dimv}$ from the probability space $(\Omega,\mcl F,\mu)$ to the  measurable space $(\R^{\dimv},\mcl{B})$ where $\mcl B$ represents the Borel $\sigma$-algebra. Moreover, an  $\mcl L^2$ random variable $V \in \splx{\dimv}$ is finite in the $\mcl L^2$ norm, i.e.,
$
\|V\|^2 \doteq \int_{\Omega}{V(\omega)^\top V(\omega)} \textrm{d}\mu(\omega)< +\infty.
$
The random variable $V$ induces the probability measure $\mu_V  $ on $(\R^{\dimv},\mcl{B})$, i.e., for all $\mcl E \in  \mcl{B}$,
$
\mu_V(\mcl E) = \mu(\{\omega \in \Omega \,|\,V(\omega) \in \mcl E\}),$
denoted as the \textit{distribution} of the random variable $V$. For compactness, we write $V \sim \mu_V$. 
Consider  two random variables $V,\tilde{V} \in \splx{\dimv}$. The expectation of $V$ is written as $\mean[V] \in \R^\dimv$, its variance is $\mbb{V}[V] \in \R^\dimv$, and the covariance of $V$ and $\tilde{V}$ is denoted by $\Sigma [V,\tilde{V}]$. 
	\begin{definition}[Gelbrich distance \cite{Givens1984}] \label{def:Gelbrich}
		 Consider two tuples of mean vectors and covariance matrices $(\wmean, \wcov)$ and $(\bar{\wmean}, \bar{\wcov})$, their  Gelbrich distance  $\mbb G ((\wmean, \wcov),(\bar{\wmean}, \bar{\wcov})) \doteq d \geq 0$ is given by
	\[
			d=\sqrt{\|\wmean-\bar{\wmean}\|^2 + \trace( \wcov + \bar{\wcov }- 2( \bar{\wcov }^{\frac{1}{2}}  \wcov \bar{\wcov }^{\frac{1}{2}} )^{\frac{1}{2}} )}.
	\]
	\end{definition}

\subsection{Stochastic linear time-invariant systems}
We consider stochastic discrete-time Linear Time-Invariant~(LTI) systems
\begin{subequations}\label{eq:RVdynamics}
	\begin{align}
		X\inst{k+1} &= AX\inst{k} +BU\inst{k}+ EW\inst{k},\quad X_0=x_\ini \\
		Y\inst{k} &= CX\inst{k} + DU\inst{k} +F  W\inst{k}, 
	\end{align}
\end{subequations}
with state $X\inst{k}\in \mcl L^2(\Omega, \mcl F,\mu;\R^{n_x})$, input $U\inst{k}\in \mcl L^2(\Omega, \mcl F,\mu;\R^{\dimu})$, output $Y\inst{k}\in \mcl L^2(\Omega, \mcl F,\mu;\R^{\dimy})$, and stochastic disturbance $W\inst{k}\in \mcl L^2(\Omega, \mcl F,\mu;\R^{\dimw})$ for $k\in \N$. 
Note that the stochastic processes $X$, $Y$, and $U$ are adapted to the  filtration containing all historical information, cf.~\cite{sullivan15introduction}.
In this paper, we consider a deterministic initial condition $x_\ini \in \R^{\dimx}$ for  \eqref{eq:RVdynamics} and identically independently distributed (i.i.d.) (not necessarily Gaussian) disturbances  $W_k\sim \mu_W$.

	Instead of exact knowledge of $\mu_W$,  we model it as an element of a given ambiguity set. The most commonly used ambiguity sets employ the Wasserstein metric.
	However, tractable reformulations of  Wasserstein ambiguity sets are limited to certain empirical distributions \cite{Wiesemann2014} or to  ambiguity sets comprising  Gaussians \cite{Nguyen2021}.
	As an alternative,  \textit{Gelbrich ambiguity sets} include  all  distributions with moments that closely match a given empirical  pair $(\bar{m},\bar{\wcov})$ 
	based on the Gelbrich distance in Definition~\ref{def:Gelbrich}. Specifically, we consider the  Gelbrich ambiguity set with a given radius $\rho \in \R^+$
	 \begin{equation}\label{eq:ambiguity_WS}
	 	\mcl A \doteq \left\{ \mu_W\in (\R^{\dimw},\mcl{B})\,\middle|\,\begin{array}{c}
	 		\mu_W \in \mcl{D}(m,\Gamma),\, \Gamma\succeq 0
	 		\\  \mbb G \left((m, \Gamma), (\bar{m},\bar{\wcov}) \right)\leq \rho
	 	\end{array}\right\} .
	 \end{equation}
	Here $\mcl D(\wmean, \wcov)$ is  the set of distributions with mean $\wmean \in \R^{\dimw}$ and covariance $\wcov \in \mbb{S}_+^{\dimw}$ . 
	It is worth to be noted that  the Gelbrich ambiguity set is an outer approximation for the corresponding Wasserstein set~\cite{Givens1984}.
	Additionally, we remark  that \cite{Li2021}  considers the special case of more restrictive ambiguity sets with fixed first two moments, i.e. $\mcl D(\bar m, \bar \Gamma)$. This corresponds to Gelbrich  sets with $\rho = 0$.

Moving from distributions (or probability measures) to random variables, we note that the ambiguity set induces an  uncertainty set for the sequence of random variables $W_{[0,N-1]}$  with respect to  $N\in\N$
\begin{equation}\label{eq:uncertainty_RV}
	\mcl W\doteq \left\{	W_{[0,N-1]}\, \middle|\,
	\begin{gathered}
	 \forall i \neq k, \quad i,k \in [0, N-1],  \\
	 W_k \sim \mu_W\in \mcl A,\,\Sigma[W_k,W_i] = 0
	\end{gathered}  \right\}.% \subset  \splx{\dimw}. 
\end{equation}
Note that $\mcl A \subset (\R^{\dimw},\mcl{B}) $ while $\mcl W  \subset  (\splx{\dimw})^N$.

\subsection{Model-based distributionally robust  optimal control}
Our analysis begins with a distributionally robust OCP with the explicit knowledge of the system model,
while   its data-driven counterpart is presented in Section~\ref{sec:data-drivenOCP}. 
Consider the uncertainty set \eqref{eq:uncertainty_RV}, we have 
\begin{subequations} \label{eq:stochasticOCP}
		\begin{align}
				\min_{ 	\substack{
					\bar{u}, \,K,\, \alpha, 
						{U},\,  {Y}, \, X
					}} \alpha \quad &
			  \label{eq:stochasticOCP_obj}\\
				\text{s.t. } \, \forall 	W_{[0,N-1]} \in \mcl{W}  , &\quad \forall k \in \set I_{[0,N-1]}, \notag \\
			\textstyle{\sum_{k=0}^{N-1}}	 \mean \big[ \|Y_{k}\|^2_Q + \|U_{k}\|^2_R\big] \leq \alpha,& \label{eq:stochasticOCP_obj_1}\\
				X_{k+1}=   AX_{k}+BU_{k} 
				+E W_{k},
			 &	\quad X_0  = x_\ini, \label{eq:OCPdynamic}
				\\
				Y_{k}=   CX_{k}+DU_{k} +F W_k,   &
				\label{eq:OCPdynamic_Y} \\
	U_{k} =  \bar{u}_k + \textstyle{\sum_{i=0}^{k-1}} K_{k,i} W_i,  &\quad U_0 =  \bar{u}_0,   
		 \quad  \label{eq:causal}\\
				\prob [a_{u,i}^\top U_k \leq 1 ]\geq 1 - \varepsilon_u,   &\quad \forall i \in \I_{[1,N_u]}, \label{eq:chance_U} \\
						\prob [a_{y,i}^\top Y_k  \leq  1 ]\geq 1 - \varepsilon_y,  &\quad \forall i \in \I_{[1,N_y]}    \label{eq:chance_Y}.
			\end{align}
	\end{subequations}
	Given the uncertainy set $\mcl W  \subset  (\splx{\dimw})^N$,  we minimize the worst-case value $\alpha \in \R$ of the objective function over the horizon $N \in \mbb{N}$ in  \eqref{eq:stochasticOCP_obj}--\eqref{eq:stochasticOCP_obj_1}.
The objective function  is  the expected value of a quadratic form with $Q \in \mbb S_+^{\dimy}$ and $R \in \mbb S_{++}^{\dimu}$. 
We consider  i.i.d.  disturbances directly entering the dynamics in \eqref{eq:OCPdynamic}-\eqref{eq:OCPdynamic_Y}.
Similar to \cite{Coppens2021,Lian2021} we aim at affine and causal disturbance feedback. This is encoded in \eqref{eq:causal} and it can be written as
\begin{equation}\label{eq:OCPcausal} 
\begin{gathered}
	U_{[0,N-1]} =  \bar{u}_{[0,N-1]} + K_w W_{[0,N-1]},  \\
	K_w  =   \left[\begin{smallmatrix}
			\mbf{0} &\mbf{0} &\cdots& \mbf{0}\\
			K_{1,0} &\mbf{0} & \cdots &   \mbf{0}\\
			\vdots & \ddots&\mbf{0} & \vdots \\
			K_{N-1,0}& \cdots  &	K_{N-1,N-2}& \mbf{0}\\
		\end{smallmatrix}\right] \in \R^{N\dimu\times N\dimw}. 
\end{gathered}
\end{equation}
 	Chance constraints   are specified as individual half-space constraints  by $a_{u,i} \in \R^{\dimu}$, $ i \in \I_{[1,N_u]}$, and $a_{y,i} \in \R^{\dimy}$,  $ i \in \I_{[1,N_y]}$ with  probabilities of $1 - \varepsilon_u$ and $1 - \varepsilon_y$, respectively,  in \eqref{eq:chance_U}-\eqref{eq:chance_Y}.

We remark that the conceptual formulation \eqref{eq:stochasticOCP} poses several challenges.
First, the optimization involves infinite-dimensional $\mcl L^2 $ random variables. Second, distributional robustness  requires  \eqref{eq:stochasticOCP_obj_1}--\eqref{eq:chance_Y}  to be satisfied for all  possible random variable sequences in $\mcl W$, resulting in infinitely many infinite-dimensional  constraints. To address these challenges, we  use the PCE framework to reformulate the random variables, the ambiguity sets, and the chance constraints.

\section{The PCE Perspective on Gelbrich Ambiguity}\label{sec:PCE}
\subsection{Primer on polynomial chaos expansion} \label{sec:PCEintro}
The core idea of PCE is that $\mcl{L}^2$ random variables can be expressed as a series expansion in a suitable basis \cite{sullivan15introduction}. To this end, consider an orthogonal polynomial basis $\{\phi^j(\xi)\}_{j=0}^{\infty}$ which spans $\spl$, i.e.
$
\basis[i]{j} = \int_{-\infty}^{\infty} \phi^i\left(\xi(\omega)\right) \phi^j\left(\xi(\omega)\right) \diff \mu(\omega) = \delta^{ij}\basis{j}
$
where $\delta^{ij}$ is the Kronecker delta. 
We remark that it is customary in PCE to consider $\phi^0=1$.

\begin{definition}[Polynomial chaos expansion]
	The PCE of a  random variable $V\in \spl$ with respect to the basis $\{\phi^j\}_{j=0}^{\infty}$ is $V = \sum_{j=0}^{\infty}\pce[j]{v} \phi^j$ with $\pce[j]{v} = \langle V, \phi^j \rangle/\basis{j}$, where $\pce[j]{v}$ is called the $j$-th PCE coefficient.\End
\end{definition}
We remark that by applying PCE component-wise the $j$th PCE coefficient vector of a vector-valued  random variable $V\in\splx{\dimv}$ reads
$
	\pce{v}^j = \begin{bmatrix} \pcecoe{v}{1,j} & \pcecoe{v}{2,j} & \cdots & \pcecoe{v}{\dimv,j} \end{bmatrix}^\top,
$
where $\pcecoe{v}{i,j}$ is the $j$th PCE coefficient of component $V^i$. Moreover, we introduce a shorthand of the matrix generated by horizontally stacking the PCE coefficients as 
$
\pcecoe{V}{[0,L-1]} \doteq \left[\pcecoe{v}{0},\pcecoe{v}{1},\dots,\pcecoe{v}{L-1}\right] \in \R^{\dimv \times L}.
$ 

\begin{definition}[Exact PCE representation \cite{muehlpfordt18comments}] \label{def:exactPCE}
	The PCE of a random variable $V \in \splx{n_v}$ is said to be exact with dimension $L$ if $ V -  \sum_{j=0}^{L-1}\pce[j]{v} \phi^j=0$. \End
\end{definition}

Furthermore, with Definition~\ref{def:exactPCE}, the expectation $\mean[V]$, the variance $\mbb{V}[V]$, and the covariance $\Sigma [V,\tilde{V}]$ can be obtained from the PCE coefficients as $\mean [V] = \pce{v}^0$, $\mbb V [V] =\sum_{j=1}^{L-1} \pce{v}^j\circ \pce{v}^j\basis{j}$,
$\Sigma[V,\tilde{V}] = \sum_{j=1}^{L-1} \pce{v}^j\tilde{\pce{v}}^{j\top}\basis{j}$,
where  
$\pce{v}^j\circ \pce{v}^j$ refers to the Hadamard product~\cite{lefebvre20moment}.

\subsection{PCE representation of disturbances}

	For i.i.d. (not necessarily Gaussian) disturbances $W_k$, $k\in\N$, we first construct an exact PCE of finite dimension. For starters, we denote the  map $\Psi: \mbb{S}_{+}^\dimw \to \R^{\dimw\times \dimw},$ $\wcov \mapsto \Psi(\wcov)$ as a \textit{generalized matrix square root} if it is bijective and satisfies $	\wcov = \Psi(\wcov)\Psi(\wcov)^\top
	$.

	Consider $\xi_k$ with $\mean[\xi_k]=0$ and $\Sigma[\xi_k,\xi_k] = I_{\dimw}$ such that 	$
	W_k= \wmean  + \Psi(\wcov) \xi_k$  holds.
	Notice that the elements of $\xi_k$---i.e. $\xi^i_k$, $i\in \I_{[1,\dimw]}$---are independently distributed and satisfy  $\mean[\xi^i_k]=0$ as well as $\mbb V[\xi^i_k]=1$.
	Using the basis  $\{\phi^j_w(\xi_k)\}_{j=0}^{\dimw} = \{1,\{\xi^i_k\}_{i=1}^{\dimw}\}$ with polynomials of degree of at most $1$, the exact and finite PCE of $W_k$ is obtained as
		\begin{align}
			W_k=\wmean  + \Psi(\wcov) \xi_k=\textstyle \sum_{j=0}^{\dimw} \pcecoe{w}{j} \phi^j_w(\xi_k),  \label{eq:PCEGaussian}
			%  \nonumber
		\end{align}
		with $\pcecoe{w}{0} = \wmean$ and $\pcecoe{W}{[1,\dimw]} \doteq \left[\pcecoe{w}{1},\dots,\pcecoe{w}{n_w}\right] = \Psi(\wcov)$.

For any finite horizon $N \in \N$  in OCP \eqref{eq:stochasticOCP}  and let the inputs $U_k$ satisfy \eqref{eq:causal} the following orthonormal basis 
\begin{equation}\label{eq:common_basis}
	\{\phi^j(\boldsymbol \xi)\}_{j=0}^{L-1} = \{1,\{\{\xi^i_k\}_{i=1}^{\dimw}\}_{k=0}^{N-1}\},
\end{equation}
where $\boldsymbol \xi = [\xi_{0}^\top,...,\xi_{N-1}^\top]^\top \in \splx{N\dimw}$ and $L = N\dimw+1$,
allows exact PCEs for $(U,Y,W,X)_k$, $k \in \I_{[0,N-1]}$, cf. \cite{Pan2022a}. 

Applying Galerkin projection onto the basis in~\eqref{eq:common_basis}  yields the dynamics of the PCE coefficients 
\begin{subequations}\label{eq:PCEdynamics}
	\begin{align}
		\pcecoe{x}{j}\inst{k+1} &= A\pcecoe{x}{j}\inst{k} +B\pcecoe{u}{j}\inst{k}+ E\pcecoe{w}{j}\inst{k},\quad \pcecoe{x}{j}_0=\delta^{0j} x_\ini, \\
		\pcecoe{y}{j}\inst{k} &= C\pcecoe{x}{j}\inst{k} + D\pcecoe{u}{j}\inst{k} +F  \pcecoe{w}{j}\inst{k},\quad \forall j \in \I_{[0,L-1]}
	\end{align}
\end{subequations}
where $\delta^{0j}$ is the Kronecker delta \cite{GhanSpan03}. 	Due to the i.i.d. property of $W_k$, the PCE coefficients for $W_{[0,N-1]}$ satisfy 
\begin{equation}\label{eq:Wcoeff}
	\left[\boldsymbol{1}_N \otimes  \pce{w}^{0}, I_N \otimes  \pce{W}^{[1,\dimw]}\right] = \pce{W}^{[0,L-1]}_{[0,N-1]}\,
\end{equation}
where  $\pce{W}^{[0,L-1]}_{[0,N-1]} \in \R^{ N \dimw\times L} $ is the vertically stacked block matrix comprising $\{\pce{W}^{[0,L-1]}_{k}\}_{k=0}^{N-1}$. 

	At first glance,  the PCE representation of $W_k$ in \eqref{eq:PCEGaussian} seemingly resembles a  usual moment-based representation.  However, using the generalized square root  of the covariance, we obtain a linear parametrization of $W_K$, which in turn simplifies the data-driven uncertainty propagation.  Furthermore, for all $W_k$ collecting  the normalized random variables $\xi_k$, $k \in \I_{[0,N-1]}$ in the basis~\eqref{eq:common_basis}, we obtain the coefficient dynamics \eqref{eq:PCEdynamics}. These dynamics are structurally similar to the original dynamics in random variables \eqref{eq:RVdynamics}. Put differently, for all $j \in \I_{[0,L-1]}$ the coefficient dynamics \eqref{eq:PCEdynamics} capture the influence of the corresponding disturbance component. We remark that considering the explicit state covariance propagation $\Sigma_{k+1} = (A+BK)\Sigma_k(A+BK)^\top +E\Sigma[W_k, W_k]E^\top$ would render it more difficult to work with data-driven system descriptions. We refer to \cite{Faulwasser2022} for a more detailed comparison of moment propagation and PCE.

\subsection{Representation of  Gelbrich ambiguity sets}\label{sec:reformulate}

The PCE reformulation of $W_k$ in \eqref{eq:PCEGaussian} suggest to translate  the  Gelbrich ambiguity set $\mcl{A}$ to an uncertainty set of the PCE coefficients, i.e., translation to a set of matrices with real numbers. Specifically, the distributions in $\mcl{A}$  are bijectively paired to  the PCE coefficients matrices by the  map  
\begin{subequations}\label{eq:Pi}
\begin{equation}
	\Pi_\Psi:~ \mu_W \in \mcl D(\wmean,\wcov) \mapsto \begin{bmatrix}
		\wmean \,|\, \Psi(\wcov)\end{bmatrix}.
\end{equation} 
Notice the design degree of freedom to use any generalized matrix square root $\Psi$.
As the principal square root $(\cdot)^{\frac{1}{2}}$ in the Gelbrich metric (Def.~\ref{def:Gelbrich}) is a non-convex function, we choose 
\begin{equation}
\Psi(\wcov) = \bar{\wcov}^{-\frac{1}{2}} ( \bar{\wcov}^{\frac{1}{2}}  \wcov \bar{\wcov}^{\frac{1}{2}} )^{\frac{1}{2}}.
\end{equation}
\end{subequations}
The map $\wcov \mapsto \Psi(\wcov)$ is bijective and it satisfies $\Psi(\wcov)\Psi(\wcov)^\top = \Gamma$.
For $(\bar{\wcov})^{-\frac{1}{2}}$ to exist, 
we assume $\bar{\wcov} \in \mbb S_{++}^{n_w}$. 
Moreover, consider the PCE coefficient ambiguity set
\begin{equation}\label{eq:Delta_WS}
	\mbb A	 =\left\{ \begin{gathered}\pcecoe{W}{[0,\dimw]} \in\\ \R^{\dimw\times (\dimw+1)} 	\end{gathered} \, \middle| \, \begin{gathered}
			\left\| \pcecoe{W}{[0,\dimw]} - \left[\bar{m} \, \middle|\,  \bar{\wcov}^{\frac{1}{2}}\right]\right\|\leq \rho\\
			\bar{\wcov}^{\frac{1}{2}}\pcecoe{W}{[1,\dimw]} \succeq 0
		\end{gathered}
		\right\}.
	\end{equation}  
\begin{lemma}[$\Pi_\Psi(\mcl A) = \mbb A$]\label{lem:bijective_WS}
	 Given the empirical moments $(\bar{m},\bar{\wcov})$  with mean $\bar{m} \in \R^{\dimw}$ and  covariance $\bar{\wcov}\in \mbb S_{++}^{\dimw}$. Consider 
	  $\Pi_\Psi$ from \eqref{eq:Pi}, 
	 the Gelbrich ambiguity set $\mcl{A}$ from \eqref{eq:ambiguity_WS}, and the PCE coefficient ambiguity set $\mbb A$ from \eqref{eq:Delta_WS}.
	Then, the element-wise image of $\mcl A$ under $\Pi_\Psi$ is given by $\mbb A$.
\end{lemma}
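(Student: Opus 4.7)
The plan is to prove the two set inclusions separately by reducing the Gelbrich distance condition to a Frobenius norm condition through a direct trace calculation, and by inverting $\Psi$ on the image.

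First I would observe that for any $\mu_W \in \mcl D(m,\Gamma)$ the Frobenius distance of $\Pi_\Psi(\mu_W) = [m \,|\, \Psi(\Gamma)]$ from the nominal matrix $[\bar m \,|\, \bar{\wcov}^{1/2}]$ decouples as
\[
\bigl\|[m \,|\, \Psi(\Gamma)] - [\bar m \,|\, \bar{\wcov}^{1/2}]\bigr\|^2 = \|m-\bar m\|^2 + \|\Psi(\Gamma) - \bar{\wcov}^{1/2}\|^2.
\]
Expanding the second term using $\|M\|^2 = \trace(MM^\top)$ and the identity $\Psi(\Gamma)\Psi(\Gamma)^\top = \Gamma$, and then applying the cyclic property of the trace to the cross terms, yields
\[
\trace\bigl(\Psi(\Gamma)\bar{\wcov}^{1/2}\bigr) = \trace\bigl(\bar{\wcov}^{-1/2}(\bar{\wcov}^{1/2}\Gamma\bar{\wcov}^{1/2})^{1/2}\bar{\wcov}^{1/2}\bigr) = \trace\bigl((\bar{\wcov}^{1/2}\Gamma\bar{\wcov}^{1/2})^{1/2}\bigr),
\]
so the full Frobenius distance squared equals exactly $d^2$ with $d = \mbb G((m,\Gamma),(\bar m,\bar{\wcov}))$ from Definition~\ref{def:Gelbrich}. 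Hence the Gelbrich ball $d \leq \rho$ is \emph{identical} to the Frobenius ball in the defining condition of $\mbb A$.

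Next I would handle the PSD side condition. For any $\mu_W \in \mcl A$ with covariance $\Gamma \succeq 0$, the matrix $\bar{\wcov}^{1/2}\Psi(\Gamma) = (\bar{\wcov}^{1/2}\Gamma\bar{\wcov}^{1/2})^{1/2}$ is the principal square root of a positive semi-definite matrix, hence symmetric and PSD; this gives the inclusion $\Pi_\Psi(\mcl A) \subseteq \mbb A$. For the reverse inclusion, given any $\pce{W}^{[0,\dimw]} = [m_\star \,|\, S] \in \mbb A$, the condition $\bar{\wcov}^{1/2} S \succeq 0$ allows defining
\[
\Gamma_\star \doteq \bar{\wcov}^{-1/2}\bigl(\bar{\wcov}^{1/2}S\bigr)^2\bar{\wcov}^{-1/2},
\]
which is PSD by congruence; a short calculation gives $(\bar{\wcov}^{1/2}\Gamma_\star\bar{\wcov}^{1/2})^{1/2} = \bar{\wcov}^{1/2}S$ and therefore $\Psi(\Gamma_\star) = S$. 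Since $\mcl D(m_\star,\Gamma_\star)$ is non-empty (for example, the Gaussian with these moments), the first step shows the Gelbrich distance bound is satisfied, so any such distribution lies in $\mcl A$ and maps to $[m_\star \,|\, S]$ under $\Pi_\Psi$.

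The combination of these two inclusions yields the claim. The main obstacle, and the only nontrivial calculation, is the trace identity in the first paragraph; everything else follows from straightforward manipulation once one recognizes that the particular choice of $\Psi$ in \eqref{eq:Pi} was engineered precisely so that the Frobenius norm of $\Psi(\Gamma) - \bar{\wcov}^{1/2}$ reproduces the Bures--Wasserstein / Gelbrich trace expression without needing any additional convex-analytic machinery.
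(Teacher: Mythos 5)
Your proof is correct and follows essentially the same route as the paper: the same decomposition of the Frobenius distance into mean and covariance parts, the same trace identity $\trace(\Psi(\Gamma)\bar{\wcov}^{1/2}) = \trace((\bar{\wcov}^{1/2}\Gamma\bar{\wcov}^{1/2})^{1/2})$ establishing equality with the Gelbrich distance, and the same two-sided treatment of the positive-semidefiniteness condition. The only difference is that where the paper simply invokes bijectivity of $\Psi$ for the reverse inclusion, you construct $\Gamma_\star = \bar{\wcov}^{-1/2}(\bar{\wcov}^{1/2}S)^2\bar{\wcov}^{-1/2}$ explicitly and note the non-emptiness of $\mcl D(m_\star,\Gamma_\star)$ --- a slightly more complete justification of surjectivity, but not a different argument.
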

\begin{proof}
First, we show that under the map $\Pi_\Psi$, the Gelbrich distance $\mbb G$ in the definition of $\mcl A$~\eqref{eq:ambiguity_WS} corresponds to the norm expression in $\mbb A$~\eqref{eq:Delta_WS}.
With  $\Pi_\Psi$ and $\Psi$ as specified in~\eqref{eq:Pi}, we have  	$\pcecoe{w}{0} = m$, $\pcecoe{W}{[1,\dimw]} = \Psi(\wcov)$ and  $d= \mbb G\left( (m,\Gamma),(\bar m, \bar \Gamma) \right) $ 
$	= \sqrt{\|\pcecoe{w}{0} - \bar{m}\|^2 + \trace( \bar{\wcov} +\wcov -2 \bar{\wcov}^{\frac{1}{2}}\pcecoe{W}{[1,\dimw]}  )}.$
	Moreover, with $M = \bar{\wcov}^{\frac{1}{2}}-\pcecoe{W}{[1,\dimw]}  $ and since
	$\wcov = \pcecoe{W}{[1,\dimw]} \pcecoe{W}{[1,\dimw]\top}$, 
	 we have
$
		d = \sqrt{\|\pcecoe{w}{0} - \bar{m}\|^2 + \trace\left( M M^\top\right)}  =	\| \pcecoe{W}{[0,\dimw]} - [\bar{m} \, |\, \bar{\wcov}^{\frac{1}{2}}]\|,
$
	where we used the properties of the Frobenius norm. 
	
	Next we prove that $	\bar{\wcov}^{\frac{1}{2}}\pcecoe{W}{[1,\dimw]}\succeq 0$  in \eqref{eq:Delta_WS} is equivalent to $\Gamma \succeq 0$ in  \eqref{eq:ambiguity_WS} provided $\pcecoe{W}{[1,\dimw]} = \Psi(\wcov)$ as in~\eqref{eq:Pi}. That is, we aim to show 
		\begin{equation}\label{eq:equivalence}
		\pcecoe{W}{[1,\dimw]} = \Psi(\wcov), \Gamma \succeq 0 \Leftrightarrow \bar{\wcov}^{\frac{1}{2}}\pcecoe{W}{[1,\dimw]}\succeq 0.
		\end{equation}
  
  	 The $\Rightarrow$ implication holds, since $ \bar{\wcov}^{\frac{1}{2}}\pcecoe{W}{[1,\dimw]} =  ( \bar{\wcov}^{\frac{1}{2}}  \wcov \bar{\wcov}^{\frac{1}{2}} )^{\frac{1}{2}} \succeq 0$. 
  	$\Leftarrow$: since $\Psi$ is bijective, its inverse map $\Psi^{-1}: \R^{\dimw\times \dimw}\to \mbb{S}_{+}^\dimw  ,$ $ \pcecoe{W}{[1,\dimw]}\mapsto \Psi^{-1}(\pcecoe{W}{[1,\dimw]})$ exists. Thus, if the right hand side of \eqref{eq:equivalence} holds, we find  $\Gamma =  \Psi^{-1}(\pcecoe{W}{[1,\dimw]}) \in \mbb{S}_{+}^\dimw$ and then the left hand side holds. 
\end{proof}

	Recall that the Gelbrich distance in  Definition \ref{def:Gelbrich} is  a non-convex function of $(m,\Gamma)$. However,  it is convex in the PCE coefficients $\pcecoe{W}{[0,\dimw]}$. Hence the PCE ambiguity set $\mbb A$ from \eqref{eq:Delta_WS} is a compact and convex subset of $\R^{\dimw\times (\dimw+1)}$.
	Finally, we arrive at the uncertainty description for the PCE coefficient sequences $W_{[0,N-1]}$ 
	\begin{equation}\label{eq:uncertainty_PCE}
	\mbb W\doteq \left\{
		\begin{gathered}\pce{W}^{[0,L-1]}_{[0,N-1]} \in\\ \R^{N\dimw \times L} 	\end{gathered}	\, \middle|\,
		\begin{gathered}
			 \pce{W}^{[0,L-1]}_{[0,N-1]}\left(	\pce{W}^{[0,\dimw]}\right)\, \text{s.t. }\eqref{eq:Wcoeff}  \\
			\pce{W}^{[0,\dimw]}\in \mbb A
		\end{gathered}  \right\}, 
	\end{equation}
	and at the PCE reformulation of $\mcl W$ from \eqref{eq:uncertainty_RV}
	\begin{equation}\label{eq:pce_W}
	\mcl W =  \left\{ \begin{gathered} W	\end{gathered}	\, \middle|\,
	\begin{gathered}
	W = \textstyle{\sum_{j=0}^{L-1}}	\pce{W}^{j} \phi^j(\boldsymbol \xi), \, \phi \text{ cf. }\eqref{eq:common_basis}  \\
	\pce{W}^{[0,L-1]} \in \mbb W ,\,
	\boldsymbol{\xi}\in \mcl{D}(0,I_{N \dimw })
	\end{gathered}  \right\}.
	\end{equation}
	Figure~\ref{fig:tranlationofSets} summarizes the relations and  maps between  the ambiguity sets $\mcl A, \mbb A$ and the sequence uncertainty descriptions $\mcl W, \mbb W$.
	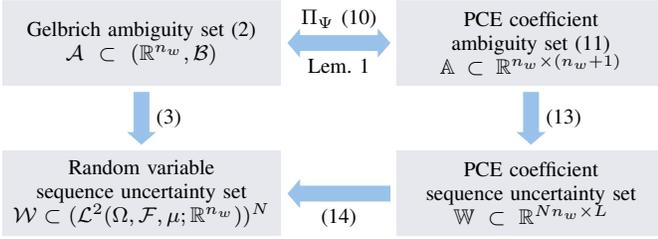
\begin{figure}[t]
		\centering
		\begin{tikzpicture}[my triangle/.style={-{Triangle[width=\the\dimexpr1.8\pgflinewidth,length=\the\dimexpr0.8\pgflinewidth]}}, my double triangle/.style={{Triangle[width=\the\dimexpr1.8\pgflinewidth,length=\the\dimexpr0.8\pgflinewidth]}-{Triangle[width=\the\dimexpr1.8\pgflinewidth,length=\the\dimexpr0.8\pgflinewidth]}}]
    \footnotesize
    \node[rectangle, draw=none, fill=gray1, minimum height=4em, text centered, text width=12.5em, outer sep=2.5pt] (A) {
    	Gelbrich ambiguity set  \eqref{eq:ambiguity_WS} \\$	\mcl A \subset (\R^{\dimw},\mcl{B})$ 
};
    \node[right=39pt of A, rectangle, draw=none, fill=gray1, minimum height=4em, text centered, text width=12em, outer sep=2.5pt] (B) {PCE coefficient ambiguity set  \eqref{eq:Delta_WS} \\ $\mbb A \subset \R^{\dimw\times (\dimw+1)}$};
    \node[below= 20pt of A,rectangle, draw=none, fill=gray1, minimum height=4em, text centered, text width=12.5em, outer sep=2.5pt] (C) { Random variable sequence uncertainty set\\ $\mcl W \subset (\splx{\dimw} )^N$};
    \node[below=20pt of B,rectangle, draw=none, fill=gray1, minimum height=4em, text centered, text width=12em, outer sep=2.5pt] (D) {PCE coefficient  sequence uncertainty set \\ $\mbb W \subset \R^{N\dimw \times L}$};
    
    \draw[line width=6pt,my double triangle, draw=lightblue](A) -- (B) node [above, midway] (n1) {$\Pi_\Psi$~\eqref{eq:Pi}};
    \draw[line width=6pt,my double triangle, draw=none](A) -- (B) node [below, midway] (n2) {Lem.~\ref{lem:bijective_WS}};
    \draw[line width=6pt,my triangle, draw=lightblue](A) -- (C) node [midway,right] (n3) {\eqref{eq:uncertainty_RV}};
    \draw[line width=6pt,my triangle, draw=lightblue](D) -- (C) node [midway, below] (n4) {\eqref{eq:pce_W}};
     \draw[line width=6pt,my triangle, draw=lightblue](B) -- (D) node [midway, right] (n5) {\eqref{eq:uncertainty_PCE}};
\end{tikzpicture}
		\caption{Relations and maps between the  sets $\mcl A, \mbb A$, $\mcl W$, and $\mbb W$.}
		\label{fig:tranlationofSets}
	\end{figure}
\section{Data-driven distributionally robust  optimal control in PCE cofficients}\label{sec:PCEOCP}
	The above reformulation of the ambiguity set $ \mcl A $ to $\mbb{W} $ enables us to the cast the distributionally robust OCP \eqref{eq:stochasticOCP} as an \textit{uncertain conic problem}, whereby  we will use a data-driven representation in lieu of  explicit knowledge of the system matrices.

\subsection{Data-driven representation of stochastic LTI systems}\label{sec:data_driven}
	For a specific uncertainty outcome $\omega \in \Omega$ the realization of $W_k$ is written as $w\inst{k} \doteq W\inst{k}\relx$, $k\in\N$. Likewise, the realizations of inputs, outputs, and states are 
	$u_k \doteq U\inst{k}\relx,$  $y_k \doteq Y\inst{k}\relx$, and $x_k \doteq X\inst{k}\relx$, respectively. Given  $\{w\inst{k}\}_{k\in \N}$, the stochastic system~\eqref{eq:RVdynamics} induces the realization dynamics
\begin{subequations}\label{eq:Realizationdynamics}
	\begin{align}
		x\inst{k+1} &= Ax\inst{k} +Bu\inst{k}+ Ew\inst{k},\quad x_0 =x_\ini \\
		y\inst{k} &= Cx\inst{k} + Du\inst{k} +F  w\inst{k}.
	\end{align}
\end{subequations}

\begin{assumption}[System properties and data] \label{ass:minimal_state_representation} 
	Consider stochastic LTI system \eqref{eq:RVdynamics} and its realization dynamics~\eqref{eq:Realizationdynamics}, we assume that
	$(A,[B,E])$ is a controllable pair, and respectively, $(A,C)$ is a observable pair. In addition, we suppose the matrices $(A, B, C, D, E, F)$ are unknown, while measurements of past input-output-disturbance realizations $u_k$, $y_k$ and $w_k$ are available. 
	\End 
\end{assumption}

\begin{definition}[Persistency of excitation \cite{willems05note}] Let $T, t \in \set{N}^+$. A sequence of inputs $\trar{u}{T-1}$ is said to be persistently exciting of order $t$ if the Hankel matrix
$
		\hankel_t(\trar{u}{T-1}) \doteq \left[\begin{smallmatrix}
			u\inst 0  &\cdots& u\inst{T-t} \\
			\vdots & \ddots & \vdots \\
			u\inst{t-1}& \cdots  & u\inst{T-1} \\
		\end{smallmatrix}\right]
$
	is of full row rank.  \End
\end{definition}

Next we recall crucial insights from \cite[Lem. 4, Cor. 2]{Pan2022a} which allow to represent the PCE coefficients dynamics~\eqref{eq:PCEdynamics} by previous recorded data of the realization dynamics \eqref{eq:Realizationdynamics}.
\begin{lemma}[\cite{Pan2022a}] \label{lem:stoch_fundamental}
	Let Assumption \ref{ass:minimal_state_representation} hold. Consider system \eqref{eq:RVdynamics} and a $T$-length realization trajectory tuple $\tra[0,T-1]{(u,w,y)}$ of its corresponding realization dynamics \eqref{eq:Realizationdynamics}.  We suppose that $\tra[0,T-1]{(u,w)}$ is persistently exciting of order $\dimx +t$.
	Then $ ( \tra{U}, \tra{W},\tra{Y})_{[0,t-1]}$ is a trajectory of \eqref{eq:RVdynamics} if and only if there exists $G \in \splx{T-t+1} $ such that 
	$\hankel_{t}(\tra[0,T-1]{v}) G=\tra[0,t-1]{V}$
	 holds for all 
	 $(\tra{v}, \tra{V})\in \{ (\tra{u},\tra{U}), (\tra{w}, \tra{W}),(\tra{y},\tra{Y})\} $. 
	Moreover, $ (\tra{\pce{u}}, \tra{\pce{w}}, \tra{\pce{y}})^j_{[0,t-1], }$, $j \in \I_{[0,L-1]}$ is a trajectory of the dynamics of PCE coefficients \eqref{eq:PCEdynamics}  if and only if there exists $\pce[j]{g} \in \R^{T-t+1}$ such that 
	$		\hankel_t(\tra[0,T-1]{v}) \pce[j]{g}=\tra{\pce{v}}_{[0,t-1]}^{j},~j\in \I_{ [0,L-1]},$
	holds for all
	 $(\tra{v}, \tra{\pce{v}})\in \{ (\tra{u},\tra{\pce{u}}), (\tra{w}, \tra{\pce{w}}), (\tra{y},\tra{\pce{y}})\}$.	
	\End
\end{lemma}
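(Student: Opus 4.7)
The plan is to reduce the statement to the classical deterministic Willems' fundamental lemma applied to an augmented input, and then to lift it twice: once to sequences of $\mcl{L}^2$ random variables, and once to sequences of deterministic PCE coefficient vectors indexed by $j \in \I_{[0,L-1]}$.

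First, I would treat the disturbance as an additional exogenous input by setting $\tilde{B} \doteq [B,\, E]$ and $\tilde{u}_k \doteq [u_k^\top, w_k^\top]^\top$, so that the realization dynamics \eqref{eq:Realizationdynamics} take the standard deterministic LTI form with input matrix $\tilde B$ and direct feedthrough $[D,\, F]$. Assumption~\ref{ass:minimal_state_representation} then gives controllability of $(A, \tilde B)$ and observability of $(A, C)$; moreover, persistency of excitation of order $\dimx + t$ of $\tra[0,T-1]{(u,w)}$ is exactly the full-row-rank condition on $\hankel_{\dimx+t}(\tra[0,T-1]{\tilde u})$. The classical Willems' fundamental lemma~\cite{willems05note} then characterizes a $t$-length sequence as a trajectory of \eqref{eq:Realizationdynamics} iff there exists $g \in \R^{T-t+1}$ satisfying both $\hankel_t(\tra[0,T-1]{\tilde u})\, g = \tra[0,t-1]{\tilde u}$ and $\hankel_t(\tra[0,T-1]{y})\, g = \tra[0,t-1]{y}$. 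Splitting the stacked Hankel row-wise into its $u$- and $w$-blocks produces the three claimed deterministic Hankel equations.

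Next I would lift to the $\mcl{L}^2$ setting. The Hankel matrices consist of deterministic pre-recorded data, so a product of the form $\hankel_t(\tra[0,T-1]{v})\, G$ is well defined whenever $G \in \splx{T-t+1}$. For the ``if'' direction, given such a $G$, I define $(\tra{U}, \tra{W}, \tra{Y})_{[0,t-1]}$ via the three Hankel equations and apply the deterministic result $\omega$-wise to $G(\omega)$ to conclude that the induced realization solves \eqref{eq:Realizationdynamics} for some initial state; $\mcl{L}^2$-integrability is automatic, as a fixed deterministic linear map preserves $\mcl{L}^2$. For the ``only if'' direction, I would obtain a measurable selector $\omega \mapsto G(\omega)$ by taking the minimum-norm least-squares solution via the Moore--Penrose pseudoinverse of the fixed, full-row-rank deterministic Hankel $\hankel_t(\tra[0,T-1]{\tilde u})$, which then inherits $\mcl{L}^2$-integrability from $(\tra{U}, \tra{W})_{[0,t-1]}$.

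For the PCE coefficient statement, the key observation is that the coefficient dynamics~\eqref{eq:PCEdynamics} share the same system matrices $(A, B, C, D, E, F)$ with \eqref{eq:Realizationdynamics}, differing only in the initial condition $\pce[j]{x}_0 = \delta^{0j} x_\ini$. Since Willems' lemma characterizes all $t$-length LTI trajectories in terms of the same Hankel matrices regardless of their initial state, applying it separately for each $j$ to the deterministic coefficient trajectory $(\pce[j]{u}, \pce[j]{w}, \pce[j]{y})_{[0,t-1]}$ yields the required $\pce[j]{g} \in \R^{T-t+1}$. The main obstacle I anticipate is the measurable selection step in the $\mcl{L}^2$ lift; routing the construction through the fixed pseudoinverse bypasses selection subtleties cleanly. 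A secondary subtlety is tracking the role of the initial state when specializing to coefficients with $j \neq 0$: this is harmless precisely because the fundamental lemma parametrizes trajectories starting from arbitrary states, not only from the recorded initial condition.
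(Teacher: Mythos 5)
First, note that the paper itself does not prove this lemma: it is recalled verbatim from \cite[Lem.~4, Cor.~2]{Pan2022a}, so your reconstruction can only be compared against the argument given there. Your overall route --- absorb the disturbance into an augmented input $\tilde u_k = [u_k^\top, w_k^\top]^\top$ so that controllability of $(A,[B,E])$, observability of $(A,C)$, and persistency of excitation of $\tra[0,T-1]{(u,w)}$ of order $\dimx+t$ put you in the setting of the deterministic fundamental lemma \cite{willems05note}, then lift $\omega$-wise to $\mcl L^2$ random variables and coefficient-wise to the PCE dynamics \eqref{eq:PCEdynamics} (which share the system matrices with \eqref{eq:Realizationdynamics} and differ only in the initial condition) --- is exactly the structure of the cited proof, and your observation that the fundamental lemma parametrizes trajectories from \emph{arbitrary} initial states is indeed the point that makes the per-$j$ application legitimate.

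There is, however, one step that fails as written. In the ``only if'' direction of the $\mcl L^2$ lift you construct $G(\omega)$ as $\hankel_t(\tra[0,T-1]{\tilde u})^\dagger\, \tra[0,t-1]{\tilde U}(\omega)$. Full row rank of the input--disturbance Hankel block alone only guarantees that this $G$ reproduces $(\tra{U},\tra{W})_{[0,t-1]}$; it does \emph{not} guarantee $\hankel_t(\tra[0,T-1]{y})\,G = \tra[0,t-1]{Y}$, because the output also depends on the initial state, and the minimum-norm solution selects a $g$ associated with some other initial state in general. The correct fixed linear map is the pseudoinverse of the stacked matrix $\bigl[\hankel_t(\tra[0,T-1]{\tilde u})^\top,\; X_{0,\mathrm d}^\top\bigr]^\top$, where $X_{0,\mathrm d}$ collects the initial states of the data columns, applied to $\bigl[\tra[0,t-1]{\tilde U}(\omega)^\top,\; X_0(\omega)^\top\bigr]^\top$: this matrix has full row rank precisely under persistency of excitation of order $\dimx+t$ together with controllability of $(A,[B,E])$, the resulting $G$ matches both the input--disturbance sequence and the initial state, and the output Hankel identity then follows from the dynamics. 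Everything else in your argument --- in particular the measurability and $\mcl L^2$-integrability of $G$ as the image of $\mcl L^2$ random variables under a fixed deterministic linear map, and the reuse of the same Hankel matrices for every coefficient index $j$ --- survives this correction unchanged.
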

	It is worth be remarked that the structural similarity of the PCE coefficient dynamics~\eqref{eq:PCEdynamics} with \eqref{eq:RVdynamics} and \eqref{eq:Realizationdynamics} is at the core of the above lemma. Note that this similarity is jeopardized by co-variance based uncertainty propagation. 

\subsection{Distributionally robust data-driven OCP}\label{sec:data-drivenOCP}

Combining the above results, we turn to  the data-driven reformulation of OCP~\eqref{eq:stochasticOCP} in terms of PCE coefficients.
\begin{assumption}[Data availability]\label{ass:Data}
	Consider a given $T$-length realization trajectory tuple $\tra[0,T-1]{(u,w,y)}$ of the corresponding realization dynamics \eqref{eq:Realizationdynamics}.  We suppose that $\tra[0,T-1]{(u,w)}$ is persistently exciting of order $\dimx +N+T_\ini$ with $T_\ini$ not smaller than the system lag of \eqref{eq:RVdynamics}, cf. \cite{Pan2022a}. \End
\end{assumption}

	 Consider $T_\ini$ past measurements of $ (u,y,w)_{[-T_\ini,-1]}$ and a $T$-length realization trajectory tuple $\tra[0,T-1]{(u,w,y)}$ satisfying Assumption~\ref{ass:Data}.  Let  $\text{p}$ and $\text{f}$ denote the ranges $[-\Tini,-1]$ and $[0,N-1]$, respectively. Let $\mcl H_{v,\text{p}}$ and $\mcl H_{v,\text{f}}$ be the first $\Tini\dimv$ rows and, respectively, the remaining $N\dimv$ rows of the Hankel matrix  	$\hankel_{N+T_\ini}(\tra{v}_{[0,T-1]})$ for $\tra{v} \in \{\tra{u},\tra{y},\tra{w}\}$. 
	 Consider the stacked Hankel matrices as $\mcl H_{\text{p}}\doteq [\mcl H_{u,\text{p}}^\top,\mcl H_{y,\text{p}}^\top,\mcl H_{w,\text{p}}^\top]^\top$ and $\mcl H_{\text{f}}\doteq [\mcl H_{u,\text{f}}^\top,\mcl H_{y,\text{f}}^\top,\mcl H_{w,\text{f}}^\top]^\top$. The uncertainty set $\mbb W$ for the PCE coefficient sequences \eqref{eq:uncertainty_PCE} gives the  finite-dimensional and convex reformulation of OCP~\eqref{eq:stochasticOCP}
	\begin{subequations}\label{eq:PCEOCP} 
		\begin{align}
			\min_{ \substack{		\bar{u}, K, \alpha,
							\mathsf{u},\mathsf{y},\mathsf{g}
							}
				 } \alpha \quad &  \\
			\text{ s.t.} \quad  	     
			\forall \pcecoe{W}{[0,L-1]}_\text{f}  \in \mbb{W}, \,\forall k \in & \set I_{[0,N-1]},\nonumber\\
			\textstyle{\sum_{k=0}^{N-1} \sum_{j=0}^{L-1} }(\|\pcecoe{y}{j}_k\|_Q^2+ \|\pcecoe{u}{j}_k\|_R^2 ) & \leq \alpha, \label{eq:OCP_obj_PCE}  \\
			\mcl H_\text{p} \pcecoe{g}{j} =	\delta^{0j} [{u}^{\top}_{\text{p}},{y}^{\top}_{\text{p}},{w}^{\top}_{\text{p}}]^\top &,  \forall j \in \I_{[0,L-1]},
			\label{eq:OCP_Hankels_ini}\\
			\mcl H_{\text{f}}\pcecoe{g}{j} =
			[\pcecoe{u}{j\top}_{\text{f}},\pcecoe{y}{j\top}_{\text{f}},\pcecoe{w}{j\top}_{\text{f}}]^\top &, \forall j \in \I_{[0,L-1]}, \label{eq:OCP_Hankels} \\
			\pce{u}_{\text{f}}^{0} = \bar{u}_{\text{f}} + K_w \pce{w}_{\text{f}}^{0}, \,
			\pce{u}_{\text{f}}^{j}= K_w\pce{w}^{j}_{\text{f}}&, \forall j \in \I_{[1,L-1]}, \label{eq:causality_u}  \\
  			a_{u,i}^\top\pcecoe{u}{0}_k +	  \sigma(\varepsilon_u)
  			\| a_{u,i}^\top \pcecoe{U}{[1,L-1]}_k\|  \leq 1 &, \forall i \in \I_{[1,N_u]}, \label{eq:OCP_chanceu} \\
  			a_{y,i}^\top\pcecoe{y}{0}_k +	  \sigma(\varepsilon_y)
  			\| a_{y,i}^\top \pcecoe{Y}{[1,L-1]}_k\|  \leq 1 &, \forall i \in \I_{[1,N_y]}, \label{eq:OCP_chancey}
		\end{align}
	\end{subequations}
	where $\delta^{0j}$ is the Kronecker delta, $K_w$ collects all feedback gains $K_{k,i}$ similar to \eqref{eq:OCPcausal}, and $\sigma(\varepsilon) =  \sqrt{ (1-\varepsilon)/\varepsilon }$. 

Lemma~\ref{lem:stoch_fundamental} justifies the data-driven representation of the dynamics of PCE coefficients~\eqref{eq:PCEdynamics} in \eqref{eq:OCP_Hankels_ini}-\eqref{eq:OCP_Hankels}. 
Note that $\delta^{0j}$ in  \eqref{eq:OCP_Hankels_ini} specifies the  PCE coefficients of the initial condition to be zero  for $j>0$, i.e.,  we consider a deterministic initial condition.
Causality and affiness of polices in \eqref{eq:causal} is stated in \eqref{eq:causality_u}.
The next result gives the exactness of the reformulation of the chance constraints from  \eqref{eq:chance_U}-\eqref{eq:chance_Y}  to \eqref{eq:OCP_chanceu}-\eqref{eq:OCP_chancey}.
	\begin{proposition}[PCEs for DRO chance constraints]\label{pro:chance_reformulation}
		Consider a  random variable $V \in \splx{\dimv}$ with its PCE $V(\boldsymbol{\xi}) = \sum_{j=0}^{L-1} \pcecoe{v}{j} \phi^j(\boldsymbol{\xi})$ regarding the basis~\eqref{eq:common_basis}. For $a \in \R^{\dimv}$, the distributionally robust chance constraint
		$	\prob [a^\top V(\boldsymbol{\xi}) \leq 1 ] ]\geq 1 - \varepsilon$, $ \forall \boldsymbol{\xi} \in \mcl{D}(0,\I_{N\dimw})
		$
		is equivalent to
$	  a^\top\pcecoe{v}{0} +	  \sqrt{ (1-\varepsilon)/\varepsilon }
		\|a^\top\pcecoe{V}{[1,L-1]} \|  \leq 1.$
	\end{proposition}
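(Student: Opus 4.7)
The plan is to reduce the distributionally robust chance constraint on the random variable $a^\top V(\boldsymbol{\xi})$ to the classical distributionally robust one-sided chance constraint for a scalar random variable with prescribed mean and variance, for which the Cantelli (one-sided Chebyshev) bound is known to be tight.

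First I would use the PCE of $V$ together with the orthonormality of the basis \eqref{eq:common_basis} to separate the affine functional of interest into its deterministic and zero-mean parts, namely
\[
a^\top V(\boldsymbol{\xi}) \;=\; a^\top \pcecoe{v}{0} \;+\; b(\boldsymbol{\xi}), \qquad b(\boldsymbol{\xi}) \;\doteq\; \textstyle{\sum_{j=1}^{L-1}} \bigl(a^\top \pcecoe{v}{j}\bigr) \phi^j(\boldsymbol{\xi}),
\]
and to identify $\mean[b(\boldsymbol{\xi})]=0$ and $\mbb V[b(\boldsymbol{\xi})] = \sum_{j=1}^{L-1}(a^\top \pcecoe{v}{j})^2 = \| a^\top \pcecoe{V}{[1,L-1]} \|^2 \doteq s^2$. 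In particular, because $\phi^j(\boldsymbol{\xi})$ for $j\ge 1$ are just the individual coordinates of $\boldsymbol{\xi}$, the functional $b(\boldsymbol{\xi})$ is linear, $b(\boldsymbol{\xi}) = c^\top \boldsymbol{\xi}$ with $\|c\|=s$.

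Next I would rewrite the chance constraint as the worst-case tail bound
\[
\sup_{\mu_{\boldsymbol{\xi}} \in \mcl{D}(0, I_{N\dimw})} \prob\bigl[c^\top \boldsymbol{\xi} \;\ge\; 1 - a^\top \pcecoe{v}{0}\bigr] \;\le\; \varepsilon,
\]
and show that this supremum equals $\sup_{\mu_Z \in \mcl{D}(0, s^2)} \prob[Z \ge 1 - a^\top\pcecoe{v}{0}]$. The inequality $\le$ is immediate because every admissible $\boldsymbol{\xi}$ induces a scalar $Z = c^\top \boldsymbol{\xi}$ with mean $0$ and variance $s^2$. For the reverse inequality, given any $Z$ with zero mean and variance $s^2$, one can synthesize $\boldsymbol{\xi} \doteq (Z/s^2) c + \boldsymbol{\eta}$, where $\boldsymbol{\eta}$ is independent of $Z$, has zero mean, and covariance $I_{N\dimw} - cc^\top/s^2 \succeq 0$. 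A short computation verifies $\mean[\boldsymbol{\xi}]=0$, $\mbb V[\boldsymbol{\xi}] = I_{N\dimw}$ and, because $c^\top(I - cc^\top/s^2) c = 0$ forces $c^\top \boldsymbol{\eta}=0$ almost surely, $c^\top \boldsymbol{\xi} = Z$ in distribution.

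The final step would be to invoke the Cantelli inequality in its tight form: for any scalar $Z$ with zero mean and variance $s^2$ and any $t>0$, $\sup \prob[Z \ge t] = s^2/(s^2+t^2)$, attained by a two-point distribution. Setting $t = 1 - a^\top \pcecoe{v}{0}$ (which must be nonnegative for feasibility whenever $s>0$), the tail-probability constraint is equivalent to $s^2/(s^2 + t^2) \le \varepsilon$, i.e. $t \ge \sqrt{(1-\varepsilon)/\varepsilon}\, s$, which is exactly the advertised conic inequality. The edge case $s=0$ is trivial. The main obstacle I foresee is the supremum-matching step: establishing that restricting the ambiguity to $\mcl{D}(0, I_{N\dimw})$ on the multivariate side does not shrink the worst-case tail probability compared to $\mcl{D}(0, s^2)$ on the scalar side. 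The explicit construction of $\boldsymbol{\xi}$ via the rank-one lift $(Z/s^2)c$ plus an orthogonal filler $\boldsymbol{\eta}$ handles this cleanly.
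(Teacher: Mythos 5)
Your proof is correct, but it takes a more self-contained route than the paper. The paper's proof performs the same first step as you do --- writing $V(\boldsymbol{\xi}) = \pcecoe{v}{0} + \pcecoe{V}{[1,L-1]}\boldsymbol{\xi}$ and isolating the affine constraint $\prob[a^\top\pcecoe{V}{[1,L-1]}\boldsymbol{\xi} \le 1 - a^\top\pcecoe{v}{0}]\ge 1-\varepsilon$ --- but then it simply invokes \cite[Th.~3.1]{calafiore06chance}, which states that a distributionally robust linear chance constraint over all distributions with fixed first two moments is equivalent to the second-order cone constraint $c^\top\mean[\boldsymbol{\xi}] + \sqrt{(1-\varepsilon)/\varepsilon}\,(c^\top\Sigma[\boldsymbol{\xi},\boldsymbol{\xi}]c)^{1/2}\le t$, and concludes by substituting $\mean[\boldsymbol{\xi}]=0$, $\Sigma[\boldsymbol{\xi},\boldsymbol{\xi}]=I_{N\dimw}$. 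You instead re-derive the content of that cited theorem from first principles: the reduction of the multivariate ambiguity set $\mcl{D}(0,I_{N\dimw})$ to the scalar set $\mcl{D}(0,s^2)$ via the rank-one lift $\boldsymbol{\xi}=(Z/s^2)c+\boldsymbol{\eta}$ with filler covariance $I-cc^\top/s^2\succeq 0$, followed by the tight Cantelli bound $\sup\prob[Z\ge t]=s^2/(s^2+t^2)$. This is essentially the standard proof of the Calafiore--El Ghaoui result, so mathematically the two arguments coincide; what your version buys is transparency about \emph{why} the worst case over the multivariate moment-ambiguity set is not smaller than the scalar one (the supremum-matching step you flag is exactly the nontrivial direction, and your explicit construction handles it correctly, since $c^\top(I-cc^\top/s^2)c=0$ forces $c^\top\boldsymbol{\eta}=0$ almost surely). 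Two minor points worth tidying: the original constraint's complement is $\prob[a^\top V>1]\le\varepsilon$ with a strict inequality, whereas you bound $\prob[\,\cdot\ge t\,]$; the supremum over the ambiguity set is the same in both cases, so the equivalence is unaffected, but a sentence noting this would close the gap. And your treatment of the edge cases $s=0$ and $t\le 0$ is adequate as sketched. Overall the proof is valid and, unlike the paper's, does not rely on an external black-box lemma.
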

\begin{proof} 
	 Using~\eqref{eq:common_basis} we have $V(\boldsymbol{\xi}) = \pcecoe{v}{0}  +  \pcecoe{V}{[1,L-1]} \boldsymbol{\xi} $, and thus the DRO chance constraint reads \[\prob \left[ a^\top\pcecoe{V}{[1,L-1]}\boldsymbol{\xi}  \leq  1 - a^\top\pcecoe{v}{0} \right]\geq 1 - \varepsilon,\, \forall \boldsymbol{\xi} \in \mcl{D}(0,\I_{N\dimw}).\]  Since this expression is bilinear in $\boldsymbol{\xi} \in $  $\splx{N\dimw}$ and the decision variables $\pcecoe{V}{[0,L-1]} \in \R^{\dimv \times L}$, it is equivalent to  $	 
	 a^\top\pcecoe{V}{[1,L-1]} \mean[\boldsymbol{\xi}]+	\sqrt{ (1-\varepsilon)/\varepsilon }  (a^\top\pcecoe{V}{[1,L-1]}  \Sigma[\boldsymbol{\xi},\boldsymbol{\xi}] $ $ \pcecoe{V}{[1,L-1]\top} a )^{1/2} \leq 1- a^\top\pcecoe{v}{0}$, cf. \cite[Th. 3.1]{calafiore06chance}. With $\mean[\boldsymbol{\xi}] = 0$ and  $\Sigma[\boldsymbol{\xi},\boldsymbol{\xi}] = I_{N\dimw}$, we conclude the assertion.
\end{proof}

\begin{theorem}[Equivalence of OCP minimizers]\label{thm:equivOCP} 
	Consider OCP \eqref{eq:stochasticOCP}  with the random-variable uncertainty set $\mcl W$ \eqref{eq:uncertainty_RV} 
	and OCP~\eqref{eq:PCEOCP} with the PCE uncertainty set $\mbb W$ \eqref{eq:uncertainty_PCE}.
	Let Assumptions~\ref{ass:minimal_state_representation}--\ref{ass:Data} and the conditions of Lemma~\ref{lem:bijective_WS} hold. Then, for any given initial condition $(u,y,w)_{[-T_\ini,-1]}$  for  OCP~\eqref{eq:PCEOCP}, there exists $x_\ini \in \R^{\dimx} $ for OCP~\eqref{eq:stochasticOCP} such that the sets of minimizers $(\bar{u}^\star,K^\star,\alpha^\star)$ of OCP~\eqref{eq:stochasticOCP} and OCP~\eqref{eq:PCEOCP} are the same.
\End
\end{theorem}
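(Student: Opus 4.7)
The strategy is to build a bijection between the feasible sets of OCP~\eqref{eq:stochasticOCP} and OCP~\eqref{eq:PCEOCP} that preserves the worst-case cost, which then immediately implies that the two minimizer sets coincide. I first fix the initial condition mapping: Assumption~\ref{ass:minimal_state_representation} yields observability of $(A,C)$, and Assumption~\ref{ass:Data} ensures $T_\ini$ is no smaller than the system lag, so the past realization $(u,y,w)_{[-\Tini,-1]}$ uniquely identifies some $x_{-\Tini}\in\R^\dimx$; propagating the realization dynamics~\eqref{eq:Realizationdynamics} then yields a deterministic $x_\ini \doteq x_0$ for OCP~\eqref{eq:stochasticOCP}. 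With this choice, the Hankel constraint~\eqref{eq:OCP_Hankels_ini} encodes exactly the deterministic initial condition $X_0 = x_\ini$ used in~\eqref{eq:OCPdynamic}.

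Next, I would establish the bijection between $W_{[0,N-1]} \in \mcl W$ and $\pce{W}^{[0,L-1]}_{[0,N-1]} \in \mbb W$. The single-step PCE~\eqref{eq:PCEGaussian} together with the i.i.d. structure~\eqref{eq:Wcoeff} parametrizes the whole disturbance sequence by one PCE coefficient matrix, and Lemma~\ref{lem:bijective_WS} identifies the Gelbrich constraint on $(m,\Gamma)$ defining $\mcl A$ with the convex constraint defining $\mbb A$. Hence the quantifier ``for all $W_{[0,N-1]} \in \mcl W$'' in~\eqref{eq:stochasticOCP_obj_1}--\eqref{eq:chance_Y} is equivalent to ``for all $\pce{W}^{[0,L-1]}_\text{f} \in \mbb W$'' in~\eqref{eq:OCP_obj_PCE}--\eqref{eq:OCP_chancey}. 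Under this quantifier, Lemma~\ref{lem:stoch_fundamental} transfers the dynamics~\eqref{eq:OCPdynamic}--\eqref{eq:OCPdynamic_Y} into the Hankel constraints~\eqref{eq:OCP_Hankels_ini}--\eqref{eq:OCP_Hankels}; expanding the disturbance feedback~\eqref{eq:causal} in the basis~\eqref{eq:common_basis} gives~\eqref{eq:causality_u} coefficient-wise, since the basis elements $\{\xi^i_k\}$ carry the information of $W_k$ alone; finally, Proposition~\ref{pro:chance_reformulation} converts the distributionally robust chance constraints~\eqref{eq:chance_U}--\eqref{eq:chance_Y} into~\eqref{eq:OCP_chanceu}--\eqref{eq:OCP_chancey}.

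For the objective, orthonormality of the basis~\eqref{eq:common_basis} and Parseval's identity yield
\[
\mean\!\left[\|Y_k\|_Q^2 + \|U_k\|_R^2\right] = \sum_{j=0}^{L-1} \bigl(\|\pcecoe{y}{j}_k\|_Q^2 + \|\pcecoe{u}{j}_k\|_R^2\bigr),
\]
so~\eqref{eq:stochasticOCP_obj_1} and~\eqref{eq:OCP_obj_PCE} are identical under the bijection, and the worst-case value $\alpha^\star$ and the optimizing $(\bar u^\star, K^\star)$ agree across the two problems. The main obstacle is the bookkeeping between the infinite-dimensional $\mcl L^2$-valued formulation and its finite-dimensional PCE counterpart: one must verify that the single basis~\eqref{eq:common_basis} simultaneously represents $W$, $U$, $Y$ (and thus $X$) exactly over the horizon, and that the i.i.d. structure~\eqref{eq:Wcoeff} is carried through so that varying $\pce{W}^{[0,\dimw]}\in\mbb A$ reproduces all sequences in $\mcl W$. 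Once these two invariants are tracked, each OCP ingredient lifts cleanly via the already-established lemmas.
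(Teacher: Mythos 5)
Your overall route is the same as the paper's: recover $x_\ini$ from $(u,y,w)_{[-\Tini,-1]}$ via observability and $\Tini\geq$ lag, use the exact PCE in the basis~\eqref{eq:common_basis}, orthonormality for the objective, Lemma~\ref{lem:stoch_fundamental} for the Hankel constraints, Lemma~\ref{lem:bijective_WS} for the ambiguity set, and Proposition~\ref{pro:chance_reformulation} for the chance constraints. However, the organizing step you propose --- ``establish the bijection between $W_{[0,N-1]}\in\mcl W$ and $\pce{W}^{[0,L-1]}_{[0,N-1]}\in\mbb W$'' --- would fail if pursued literally: the map from $\mcl W$ to $\mbb W$ is surjective but not injective, because $\mbb A$ only pins down the first two moments of $\mu_W$. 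Two disturbance sequences whose marginals share $(m,\Gamma)$ but differ in higher moments (equivalently, whose normalized drivers $\boldsymbol{\xi}$ have different distributions within $\mcl{D}(0,\I_{N\dimw})$) are distinct elements of $\mcl W$ yet map to the same PCE coefficient matrix. So ``$\forall W\in\mcl W$'' is strictly stronger, a priori, than ``$\forall \pce{W}^{[0,L-1]}_{\text{f}}\in\mbb W$''.

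The paper resolves this not by a bijection but by the parametrization~\eqref{eq:pce_W}, which splits the quantifier over $\mcl W$ into two nested quantifiers: one over $\pce{W}^{[0,L-1]}_{\text{f}}\in\mbb W$ and one over $\boldsymbol{\xi}\in\mcl{D}(0,\I_{N\dimw})$. The inner quantifier is then discharged constraint by constraint: the objective and the coefficient dynamics depend on $\boldsymbol{\xi}$ only through its first two moments (orthonormality / Galerkin projection), and the chance constraints absorb ``$\forall\boldsymbol{\xi}\in\mcl{D}(0,\I_{N\dimw})$'' exactly via Proposition~\ref{pro:chance_reformulation}. Only after all reformulated constraints are shown to be independent of the basis can the basis be dropped, leaving the single quantifier over $\mbb W$ in OCP~\eqref{eq:PCEOCP}. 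Your plan contains all the needed ingredients (Parseval for the objective, Proposition~\ref{pro:chance_reformulation} for the constraints), and you even flag the issue at the end as ``verifying that varying $\pcecoe{W}{[0,\dimw]}\in\mbb A$ reproduces all sequences in $\mcl W$'' --- but you should replace the bijection claim with this quantifier-splitting argument, since that is precisely where the equivalence is earned rather than assumed.
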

\begin{proof}
	The proof relies on that the PCE reformulation of all random variables in the basis \eqref{eq:common_basis} is exact and the omission of the basis from OCP \eqref{eq:stochasticOCP} to OCP~\eqref{eq:PCEOCP} is without loss of information. 
	Due to Assumption~\ref{ass:minimal_state_representation} the system is observable and the measurements of $(u,w,y)$ are exact. Hence $(u,y,w)_{[-T_\ini,-1]}$ determines a unique initial state $x_\ini$ in OCP~\eqref{eq:stochasticOCP} given $T_\ini$ is not smaller than the system lag.  
	
	Using the basis~\eqref{eq:common_basis} all random variables in OCP~\eqref{eq:stochasticOCP} admit exact PCEs with at most  $L = N\dimw+1$ terms cf. \cite[Prop.~1]{Pan2022a}. 
	Replacing all random variables with their PCEs the constraint  \eqref{eq:stochasticOCP_obj_1} is equivalent to \eqref{eq:OCP_obj_PCE}  due to the orthonormality of the basis~\eqref{eq:common_basis}. With Assumption~\ref{ass:Data},  \eqref{eq:OCP_Hankels_ini}-\eqref{eq:OCP_Hankels} exactly captures the PCE coefficient dynamics, cf. Lemma~\ref{lem:stoch_fundamental}. Moreover,  \eqref{eq:causality_u}  expresses the causal and affine policies \eqref{eq:causal} in PCE coefficients. 
	With \eqref{eq:pce_W} we  split  the uncertainty description $\forall W_\text{f} \in \mcl{W}$ into two conditions $\forall \pcecoe{W}{[0,L-1]}_\text{f}  \in \mbb{W}$ and $\forall \boldsymbol{\xi} \in \mcl{D}(0,\I_{N\dimw})$. Using the latter condition and applying Proposition~\ref{pro:chance_reformulation} the chance constraints \eqref{eq:chance_U}-\eqref{eq:chance_Y}  are exactly reformulated to \eqref{eq:OCP_chanceu}-\eqref{eq:OCP_chancey}. Notice that the reformulated objective constraint and chance constraints are independent of the PCE basis~\eqref{eq:common_basis}. 	
	Thus, without loss of information, we drop the basis and finally obtain OCP~\eqref{eq:PCEOCP}. Since the reformulation from  OCP \eqref{eq:stochasticOCP} to OCP~\eqref{eq:PCEOCP} is exact,  the sets of their minimizers restricted to the variables $(\bar{u}^\star,K^\star,\alpha^\star)$ coincide.
\end{proof}

\subsection{Numerical implementation}

Observe that OCP \eqref{eq:PCEOCP} is an uncertain conic problem. Hence tractable reformulations are possible for specific types of uncertainty sets  \cite{BenTal2009}. In our approach, we approximate the uncertainty set $\mbb{A} $ in \eqref{eq:Delta_WS} by a polytope and than define the approximation of $\mbb{W}$ accordingly.
To this end, we uniformly sample $s \in \N^+$ points $\delta^j \in \R^{\dimw\times (\dimw+1)}  $  from $\mbb A$ for $j\in \I_{[1,s]}$. \footnote{An intuitive strategy is to sample uniformly over hypercubes which contain $\mathbb A$ and to neglect any samples which are not in $\mathbb A$. }  
We  approximate  $\mbb{A} $ by the convex hull of the sample points, denoted as
$
\widetilde{\mbb{A}} =\textrm{Conv}(\delta^1,...,\delta^{s}) \subset \mbb A.
$
By linearly lifting each vertex of  $\widetilde{\mbb{A}}$ via \eqref{eq:Wcoeff}, we obtain $\widetilde{\mbb{W}}$ similarly as in \eqref{eq:uncertainty_PCE}.

We denote the vertices of $\widetilde{\mbb{W}} $ by $\Delta \doteq \left\{\widetilde{\delta}^j, j \in \{1, \dots,\tilde{s}\}\right\}$  which are a subset of the lifted  sample points with $\tilde{s} \leq s $. Replacing  $\mbb W$ with the countable set $\Delta$, we obtain
\begin{equation}\label{eq:approximated_OCP} 
		\min_{ \substack{		\bar{u}, K, \alpha,
				\mathsf{u},\mathsf{y},\mathsf{g}
			}
		} \alpha \quad 
		\text{s.t. }   	     
		\forall \pcecoe{W}{[0,L-1]}_\text{f}  \in \Delta,\, \eqref{eq:OCP_obj_PCE}-\eqref{eq:OCP_chancey} . 
\end{equation}
  Observe that with  \eqref{eq:OCP_chanceu}--\eqref{eq:OCP_chancey}, \eqref{eq:approximated_OCP}  is a second-order cone program whose computational complexity is $O(\sqrt{N(N_u+N_y)\tilde{s}})$, cf.~\cite{Lobo1998}. 
 Due to the tight page limit, a detailed analysis of the sample efficiency of the proposed approximation strategy is postponed to
future work. Instead, we demonstrate its efficacy numerically.

\section{Numerical Example}\label{sec:simulation}
We consider the discrete-time stochastic double integrator 
	\begin{align*}
	X_{k+1}&= \begin{bmatrix}
	1 & 1\\
	0& 1 
\end{bmatrix} X_k +\begin{bmatrix}
	0.5 \\
	1
\end{bmatrix} U_k + W_k,\,
Y_k &= \begin{bmatrix}
1 & 0
\end{bmatrix} X_k,
	\end{align*}
where the $W_k$ are i.i.d. with Gaussian mixture distributions. Especially, $\mu_W$ is the mixture of  $\mcl{N}( \left[\begin{smallmatrix}
		0.1  \\
		0.1 
	\end{smallmatrix}\right],0.01 I_2 )$ and  $\mcl{N}( \left[\begin{smallmatrix}
		-0.1  \\
		-0.1 
	\end{smallmatrix}\right],0.01 I_2 )$ with mean $m_{\text{true}} = [0,0]^\top$ and covariance $\Gamma_{\text{true}}= \left[\begin{smallmatrix}
		0.03 &  0.02 \\
		0.02 & 0.03
	\end{smallmatrix}\right]$.
Notice that these true values of mean and covariance are unknown to the OCP. We specify $T_\ini = 2$ which corresponds to the system lag. 
The weighting matrices are  $Q = R =1$ for $Y$  and $U$, and the prediction horizon is $N=10$.
Chance constraints on the input require  $U_k \leq 0.5$ and $U_k \geq -0.5$ to be satisfied  individually with  probability of no less than $80\%$  for $k =0,...,9$. 

 To construct OCP~\eqref{eq:approximated_OCP}  based on measured data, we first apply $70$ random inputs to the system and record the output responses as well as the realized disturbances. Then we use this data to construct Hankel matrices and to estimate the  moments of $W$ as
 $ [\bar{m}\,|\,\bar{\Gamma}]=\left[
 \begin{smallmatrix}
 	0.0025 \\
 0.0025 
 \end{smallmatrix}
 \,\middle|\,\begin{smallmatrix}
  0.0211 & 0.0100\\
 	 0.0100 & 0.0157
 \end{smallmatrix}.\right] 
 $
Using $[\bar{m}\,|\,\bar{\Gamma}]$ as the empirical moment pair and setting the radius $\rho = \bar{\rho} \cdot \|[\bar{m}\,|\,\bar{\Gamma}]\|$ for a user-chosen $\bar{\rho}\in \R^+$, we obtain Gelbrich ambiguity sets $\mcl{A}$ \eqref{eq:ambiguity_WS} and the corresponding PCE uncertainty sets $\mbb A$ \eqref{eq:Delta_WS}.
To construct $\Delta$, we uniformly sample $s$ points from $\mbb A$. Subsequently, we investigate the effect of varying radius  $\rho$ and the number of samples $s$. 

We consider  three cases of OCP~\eqref{eq:approximated_OCP} : 
\begin{itemize}
\item[(I)]  The robust case, where OCP~\eqref{eq:approximated_OCP}  is solved with $\Delta$ for different values of $\bar{\rho}$ and $s$. 

\item[(II)]  The optimistic case, where OCP~\eqref{eq:approximated_OCP}  is solved with 
	$\Delta = \left\{
	[\bar{m},|,\Psi(\bar{\Gamma})]
	\right\}$, using the empirical moments estimated from the 70 recorded  disturbance samples. 

\item[(III)]  The ideal case, i.e.,  OCP~\eqref{eq:approximated_OCP}   with $\Delta =\left\{[m_\text{true} \,|\,\Psi({\Gamma_{\text{true}}})]\right\}$, utilizing the true moments.
\end{itemize}
Each OCP is solved using the same initial data $u_{\text{p}}$, $y_{\text{p}}$, and ${w}_{\text{p}}$. Note that with ambiguity sets of fixed moments, cases II and III are instances of the approach in \cite{Li2021}.

Using $1000$ different sampled disturbance realization sequence of length $10$ each, Table \ref{tab:compare} compares the averaged cost $J$ and the number of constraint violations $\#_V$ for case I with different values of $\bar{\rho}$ and $s$ with cases II \& III.
We see that increasing $\bar{\rho}$ and $s$ leads to fewer constraint violations and decreased performance.
Comparing case I with cases II \& III, it is evident that the former provides a more robust solution. 
 Figure~\ref{fig:response} shows the corresponding input and output responses 	of  case I with $\bar{\rho} = 0.5$ and $s = 50$ as well as cases II \& III.  Observe that the input responses of case I violate the constraints much less  frequently compared to case II (with moments estimated from data) and still achieve similar output responses as case III (with the true moments).

\begin{table}[t]
	\caption{Comparison of the averaged cost $ J$ and the number of constraints violation $\#_V$ for 1000 realization trajectories. }\label{tab:compare}
				
	\centering
	\begin{adjustbox}{width=\columnwidth}
		\centering
		\begin{tabular}{c ccccccc c  }	
			\toprule
		case I	&\multicolumn{2}{c}{$\bar{\rho} = 0.1$} & \multicolumn{2}{c}{ $\bar{\rho} = 0.3 $}   &  \multicolumn{2}{c}{ $\bar{\rho} = 0.5 $} & \multicolumn{2}{c}{ $\bar{\rho} = 0.7$}   \\
		s	&  $J$ 	& $\#_V$& $J$&$\#_V$  &$J$ & $\#_V$ & $J$ & $\#_V$ 	\\
			\midrule
			10 & 24.54	& 138 &24.79 &68	&	25.24	 & 27 &	 25.56&	11\\
			50  &24.58& 124& 25.15 &53 &25.79& 24 &26.65 & 6 
			\\ 
			100 & 24.58& 124&	25.15&53&25.79&24& 26.65&6
			\\ 
			\bottomrule
		\end{tabular}
	\end{adjustbox}
\\
\centering
		\begin{tabular}{c ccc cc}	
	&  $J$ 	& $\#_V$& &$J$&$\#_V$ \\
	case II & 24.47 & 184 & case III  & 25.04&26 \\	
	\bottomrule
\end{tabular}
\end{table}

\begin{figure}[tb]
	\centering
		\includegraphics[width=8.7cm]{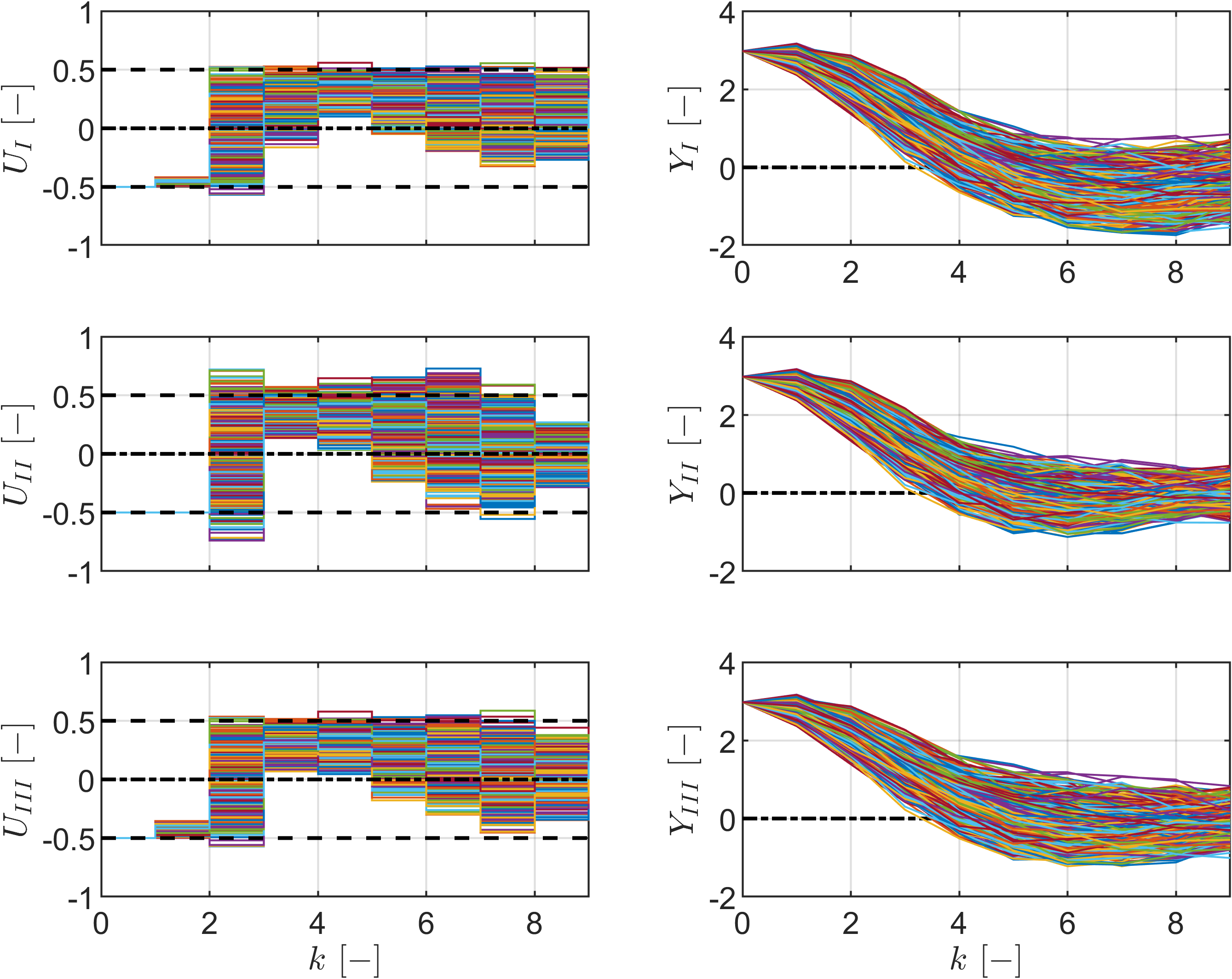}
		\caption{Input and output  response  for 1000 disturbance sequences of case I with $\bar{\rho} = 0.5$, $s=50$ (top), case II (middle), and case III (bottom). \label{fig:response}} 		
\end{figure}

\section{Conclusion and Outlook}\label{sec:conclusion}
This paper discussed distributionally robust uncertainty propagation for LTI systems via data-driven stochastic optimal control. We leveraged polynomial chaos expansions to derive an exact reformulation of model-based distributionally robust OCPs with Gelbrich ambiguity sets to data-driven uncertain conic problems with a finite-dimensional convex uncertainty set in PCE coefficients. A tractable approximation to convex programs has been proposed and illustrated via an example. 
Future work will consider tailored sampling strategies for the PCE coefficient ambiguity set, exact reformulations for  robust second-order cone constraints \cite{BenTal2009}, and  the effect of  the size of the previous recorded data.

\bibliographystyle{IEEEtran}
\bibliography{IEEEabrv}

\end{document}